\title{
A note on the quantum Wielandt inequality
}
\date{}
\author{Owen Ekblad\orcidlink{0009-0006-0834-0327}\thanks{ekbladow@msu.edu}}
\affil{Michigan State University, Department of Mathematics}
\begin{document}

\pagenumbering{arabic}
\lhead{\thepage}
\maketitle

\begin{abstract}
In this note, we prove that the index of primitivity of any primitive unital Schwarz map is at most $2(D-1)^2$, where $D$ is the dimension of the underlying matrix algebra.
This inequality was first proved by Rahaman for Schwarz maps which were both unital and trace preserving. 
As we show, the assumption of unitality is basically innocuous, but
in general not all primitive unital Schwarz maps are trace preserving.
Therefore, the precise purpose of this note is to showcase how to apply the method of Rahaman to unital primitive Schwarz maps that don't preserve trace. 
As a corollary of this theorem, we show that the index of primitivity of any primitive 2-positive map is at most $2(D-1)^2$, so in particular this bound holds for arbitrary primitive completely positive maps.
We briefly discuss of how this relates to a conjecture of Perez-Garcia, Verstraete, Wolf and Cirac. 
\end{abstract}





%
%
\hspace{-5.2mm}\textbf{Acknowledgments}
The author completed this work while partially supported by research assistantships from Professor Jeffrey Schenker with funding from the US National Science Foundation under Grant No. DMS-2153946.


 \section{Introduction}
%
%
%
%
%
%
%
%
%
%
Quantum Wielandt inequalities have been studied, both implicitly and explicitly, in various contexts for many years, but have received more attention in recent years \cite{Michaek2019QuantumRevisited, Rahaman2020AInequality, Jia2024AInequality}, in large part due to pioneering work by Perez-Garcia, Verstraete, Wolf and Cirac on matrix product states (MPS) \cite{Perez-Garcia2007MatrixRepresentations}, and have since been studied both for their relationship with MPS and with quantum information theory \cite{Sanz2010AInequality}.
In this short note, we prove an optimal order quantum Wielandt inequality for general primitive maps satisfying the Schwarz inequality (defined below) without any additional assumptions, thereby extending a theorem of Rahaman \cite{Rahaman2020AInequality}.
Let $\matrices$ denote the space of $D\times D$ matrices with entries in $\mbC$, let $\mbP_D$ denote the set of positive semidefinite matrices, and let $\mbP_D^{\circ}$ denote the set of positive definite matrices.
Say a linear map $\phi:\matrices\to\matrices$ is primitive if there exists $n\in\mbN$ such that $\phi^n\seq{\mbP_D\setminus\set{0}}\subseteq\mbP_D^{\circ}$, and say $\phi$ is a Schwarz map if $\phi(a^*a)\geq \phi(a)^*\phi(a)$ for all $a\in\matrices$, where $(\cdot)^*$ denotes the conjugate transpose.
For a primitive map $\phi$, we let $\primind{\phi}$ denote the smallest $n\in\mathbb{N}$ so that $\phi^n\seq{\mbP_D\setminus\set{0}}\subseteq\mbP_D^{\circ}$, and call $\primind{\phi}$ the index of primitivity of $\phi$. 
Our main theorem is the following. 
\begin{thmx}\label{Thm:Main_theorem_introduction}
    $\primind{\phi}\leq 2(D-1)^2$ for primitive unital Schwarz maps $\phi:\matrices\to\matrices$.
\end{thmx}
This bound was proved by Rahaman in \cite{Rahaman2020AInequality} under the assumption that $\phi$ was a both unital \textit{and} trace preserving Schwarz map.
As we show below (Proposition \ref{Prop:All_Schwarz_maps_are_basically_unital}), the assumption that a primitive Schwarz map is unital is, for our purposes, basically innocuous, but there are many examples of unital Schwarz maps which are not trace preserving, and therefore the main purpose of this note is to demonstrate that the assumption of trace preservation in \cite{Rahaman2020AInequality} is unnecessary. 
Let us state as a theorem the following corollary of Theorem \ref{Thm:Main_theorem_introduction} for a large class of positive maps.
Recall a linear map $\phi:\matrices\to\matrices$ is called $2$-positive if $\phi\otimes\operatorname{Id}_2:\matrices\otimes\mbM_2\to\matrices\otimes\mbM_2$ is positive, where $\operatorname{Id}_2:\mbM_2\to\mbM_2$ is the identity map. 
\begin{thmx}\label{Thm:Main_theorem_compos}
    $\primind{\phi}\leq 2(D-1)^2$ for primitive $2$-positive maps $\phi:\matrices\to\matrices$.
\end{thmx}
Notice there is no requirement that $\phi$ be unital or trace preserving in Theorem \ref{Thm:Main_theorem_compos}.
Examples of $2$-positive maps include all completely positive maps, hence Theorem \ref{Thm:Main_theorem_compos} applies to any completely positive maps, but there are many examples of $2$-positive maps that are not completely positive \cite{Cho1992GeneralizedAlgebra}.
Not all $2$-positive maps are Schwarz maps, however, for the simple reason that all Schwarz maps are contractions \cite[Ch. 2]{Strmer2013PositiveAlgebras}; conversely, not all unital Schwarz maps are $2$-positive, per \cite[Appendix A]{Choi1980SomeC-algebras}.
The order of our bound is indeed optimal, since there exist primitive completely positive maps $\phi_D:\matrices\to\matrices$ for which $\primind{\phi_D} = O(D^2)$ \cite{Sanz2010AInequality}. 
In general, we do not know if this inequality is sharp, although, in light of the similarity between the above bound and the optimal \textit{classical} Wielandt inequality---which is $(D-1)^2 + 1$ \cite{Wielandt1950UnzerlegbareMatrizen}---it seems interesting to ask whether the multiplicative factor of 2 in our bound may be lowered.

\section{Schwarz maps}
Let us start with some notation and a general discussion of Schwarz maps.
To denote that $x\in\mbP_D$, write $x\geq 0$, and to denote $x\in\mbP_D^{\circ}$, write $x > 0$. 
Let $a^*$ denote the conjugate transpose of a matrix $a\in\matrices$. 
Call $p\in\matrices$ a projection if $p\in\mbP_D$ and $p^2 = p$. 
For two projections $p, q\in\matrices$, write $p\sim q$ if $\operatorname{Tr}(p) = \operatorname{Tr}(q)$.
Let $\mbI\in\matrices$ denote the identity.
For $z>0$, let $\inner{\cdot}{\cdot}_z$ denote the inner product on $\matrices$ defined by 
\begin{equation}
    \inner{a}{b}_z := \operatorname{Tr}\seq{za^*b},
\end{equation}
where $\operatorname{Tr}$ denotes the usual trace.
Because $z > 0$, this is a nondegenerate inner product making $\matrices$ into a Hilbert space.
In the case that $z = \mbI$, simply write $\inner{\cdot}{\cdot}$ to denote $\inner{\cdot}{\cdot}_\mbI$.
For $\phi:\matrices\to\matrices$ linear, let $\phi^*:\matrices\to\matrices$ denote the unique linear map such that $\inner{\phi(a)}{b} = \inner{a}{\phi^*(b)}$ for all $a, b\in\matrices$.
Recall some basic notions of maps $\matrices\to\matrices$. 
\begin{definition}
    Let $\superops$ denote the set of linear maps $\phi:\matrices\to\matrices$ and let $\operatorname{Id}_D\in \superops$ denote the identity operator.
    Fix $\phi\in\superops$.
    \begin{itemize}
        \item $\phi$ is called {positive} if $\phi\seq{\mbP_D}\subseteq\mbP_D$.

        \item $\phi$ is called $n$-positive if $\phi\otimes\operatorname{Id}_n:\matrices\otimes\mbM_n\to\matrices\otimes\mbM_n$ is positive. 
        We call $\phi$ completely positive if $\phi$ is $n$-positive for all $n$. 

        \item $\phi$ is called a {Schwarz map} if for all $a\in\matrices$, $\phi(a^*a)\geq \phi(a)^*\phi(a)$. 
        In general, call the equation $\phi(a^*a)\geq \phi(a)^*\phi(a)$ the {Schwarz inequality}.

        \item $\phi$ is called unital if $\phi(\mbI) = \mbI$. 

        \item $\phi$ is called is {strictly positive} if $\phi\seq{\mbP_D\setminus\set{0}}\subseteq\mbP_D^{\circ}$. 

        \item $\phi$ is called is {primitive} if there is $n\in\mbN$ such that $\phi^n$ is strictly positive. 


        \item $\phi$ is called is {fully irreducible} if for all projections $p, q\in\matrices$ with $p\sim q$, if there is $\lambda > 0$ such that $\phi(p)\leq \lambda q$, we have $q, p\in\set{0, 1}$.
    \end{itemize}
\end{definition}
Recall that for any primitive positive map $\phi\in\superops$, there is a unique $z>0$ with $\operatorname{Tr}(z) = 1$ such that $\phi^n(z) = rz$, where $r>0$ is the spectral radius of $\phi$ \cite{Evans1978SpectralC-Algebras}. 
We call this $z$ the Perron-Frobenius eigenvector for $\phi$. 
The following fact is well-known, but we provide a sketch of the proof for completeness.
\begin{lemma}\label{Lem:Prim_strong_mixing}
    Assume $\phi:\matrices\to\matrices$ is a unital primitive Schwarz map. 
    Let $\varrho$ denote the Perron-Frobenius eigenvector of $\phi^*$.
    Then for all $a\in\matrices$, $\lim_{n\to\infty} \phi^n(a) = \inner{\varrho}{a}\mbI$.
\end{lemma}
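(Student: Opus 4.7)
My plan is to reduce to standard spectral theory: show that $1$ is an algebraically simple eigenvalue of $\phi$ with eigenvector $\mbI$, that every other eigenvalue of $\phi$ lies strictly inside the unit disk, and then identify the resulting spectral projection using $\varrho$.

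First, since every Schwarz map is positive (take $a=b^{1/2}$ for $b\ge 0$ in the Schwarz inequality), $\phi$ is a positive unital map with $\phi(\mbI)=\mbI$. By primitivity there is some $m\in\mbN$ with $\phi^m$ strictly positive, and $\phi^m$ is still unital. The noncommutative Perron-Frobenius theorem for strictly positive maps on $\matrices$ (see, e.g., \cite{Evans1978SpectralC-Algebras}) then yields that $1$ is the spectral radius of $\phi^m$, that $1$ is algebraically simple with eigenvector $\mbI$, and that $1$ is the only point of the spectrum of $\phi^m$ on the unit circle. Transferring this to $\phi$: if $\phi(x)=\lambda x$ with $|\lambda|=1$, then $\phi^m(x)=\lambda^m x$ forces $\lambda^m=1$ and $x\in\mbC\mbI$, whence $\lambda=1$; moreover the generalized eigenspace of $\phi$ at $1$ embeds into that of $\phi^m$ and is therefore one-dimensional.

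The Jordan decomposition of $\phi$ on $\matrices$ then reads $\phi=P+N$ where $P$ is a rank-one projection onto $\mbC\mbI$ satisfying $PN=NP=0$ and $N$ has spectral radius strictly less than $1$; hence $\phi^n=P+N^n\to P$ as $n\to\infty$, and there is a linear functional $\ell$ on $\matrices$ with $P(a)=\ell(a)\mbI$. Using $\phi^*(\varrho)=\varrho$ (valid since $\phi^*$ has the same spectral radius as $\phi$, namely $1$) and $\operatorname{Tr}(\varrho)=1$, we compute
\[
\ell(a)=\ell(a)\operatorname{Tr}(\varrho)=\inner{\varrho}{P(a)}=\lim_{n\to\infty}\inner{\varrho}{\phi^n(a)}=\lim_{n\to\infty}\inner{(\phi^*)^n(\varrho)}{a}=\inner{\varrho}{a},
\]
which yields the stated limit. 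The only nontrivial ingredient is the noncommutative Perron-Frobenius theorem for strictly positive maps, which is classical; notice that the Schwarz hypothesis intervenes only to guarantee positivity of $\phi$, and the argument goes through for any primitive unital positive map.
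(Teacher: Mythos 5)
Your proof is correct and follows essentially the same route as the paper: a Jordan-form decomposition of $\phi$ combined with the fact that primitivity forces the peripheral spectrum to be the simple eigenvalue $1$ with eigenvector $\mbI$, followed by identification of the limiting functional via $\phi^*(\varrho)=\varrho$. The only (immaterial) differences are that you derive the peripheral-spectrum statement by passing to the strictly positive power $\phi^m$ and transferring back, where the paper cites it directly for $\phi$, and that you identify the constant by a direct limit where the paper uses Ces\`aro means.
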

\begin{proof}
    By viewing $\phi$ as an element of $\mbM_{D^2}$ and decomposing it into its Jordan normal form, then, since by primitivity $\lambda = 1$ is the unique eigenvalue of $\phi$ with $|\lambda| = 1$ \cite{Evans1978SpectralC-Algebras}, it is clear that $\lim_{n\to\infty} \phi^n(a) = c \mbI$ for some $c\in\mbC$. 
    On the other hand, taking Ces\`aro means and noting that $\phi^*(\varrho) = \varrho$, we see that 
    \begin{equation}
        c
            =
        \inner{\varrho}{c\mbI}
            =
        \lim_{N\to\infty}
        \cfrac{1}{N}
        \sum_{n=1}^N
        \inner{\varrho}{\phi^n(a)}
            =
        \lim_{N\to\infty}
        \cfrac{1}{N}
        \sum_{n=1}^N
        \inner{\varrho}{a}
            =
        \inner{\varrho}{a},
    \end{equation}
    which concludes the proof. 
\end{proof}
Notice that we have not asked for our Schwarz maps to be unital. 
In the literature, a map $\phi\in\superops$ is often called a Schwarz map if it satisfies the Schwarz inequality for all $a\in\matrices$ in addition satisfying unitality.
We now show that the assumption of unitality is, in the case of interest to us, essentially artificial.
\begin{prop}\label{Prop:All_Schwarz_maps_are_basically_unital}
    A primitive Schwarz map $\phi\in\superops$ has spectral radius 1 if and only if $\phi$ is unital. 
\end{prop}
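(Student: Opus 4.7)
My plan is to separate the two directions, with the easier one first. For the ($\Leftarrow$) implication, unitality immediately makes $\mbI$ an eigenvector of eigenvalue $1$, giving a lower bound of $1$ on the spectral radius. For the matching upper bound I would invoke the Schwarz inequality: for any $a \in \matrices$, the bound $a^*a \leq \|a\|^2\mbI$ combined with Schwarz and positivity gives $\phi(a)^*\phi(a) \leq \phi(a^*a) \leq \|a\|^2 \phi(\mbI) = \|a\|^2 \mbI$, so $\phi$ is contractive in the operator norm and its spectral radius is at most $1$. This is the standard content the paper already alludes to in its introduction.

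For the ($\Rightarrow$) direction, my first move is to apply the Schwarz inequality with $a = \mbI$, yielding $\phi(\mbI)^2 \leq \phi(\mbI)$. Since $\phi(\mbI) \geq 0$, the scalar inequality $\lambda^2 \leq \lambda$ holds for every eigenvalue $\lambda$ of $\phi(\mbI)$, pinning its spectrum to $[0,1]$ and forcing $\phi(\mbI) \leq \mbI$. Set $h := \mbI - \phi(\mbI) \geq 0$; the goal becomes $h = 0$. Positivity of $\phi$ then makes the iterates $\phi^n(\mbI)$ monotone decreasing in the Loewner order and bounded below by $0$, so they converge (in finite dimension) to some $a_\infty \geq 0$ satisfying $\phi(a_\infty) = a_\infty$.

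The crucial step would be to show $a_\infty$ is in fact positive definite. By the Perron-Frobenius statement recalled in the excerpt, since $\phi$ is primitive with spectral radius $1$, the positive fixed-point ray of $\phi$ is one-dimensional and spanned by a positive definite $z$, so $a_\infty = \alpha z$ for some $\alpha \geq 0$. The matrix bound $\mbI \geq z/\|z\|$ survives iteration of the positive map $\phi$ and gives $\phi^n(\mbI) \geq z/\|z\|$ for every $n$, whence $a_\infty \geq z/\|z\|$. In particular $a_\infty \geq \delta \mbI$ with $\delta := \lambda_{\min}(z)/\|z\| > 0$.

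The contradiction would then close the argument. If $h \neq 0$, primitivity gives some $m$ with $\phi^m(h) > 0$, so $\phi^m(h) \geq \varepsilon \mbI$ for some $\varepsilon > 0$. Iterating positivity yields $\phi^{n+m}(h) \geq \varepsilon \phi^n(\mbI) \geq \varepsilon\delta \mbI$ for every $n$, a uniform strictly positive lower bound. On the other hand, the telescoping identity $\phi^n(h) = \phi^n(\mbI) - \phi^{n+1}(\mbI) \to a_\infty - a_\infty = 0$ forces these quantities to vanish, which is incompatible with $\phi^{n+m}(h) \geq \varepsilon\delta\mbI > 0$. I expect the main obstacle to be exactly the step establishing positive definiteness of $a_\infty$: merely $a_\infty \geq 0$ is not enough for the telescoping contradiction, and it is precisely primitivity, through the Perron-Frobenius structure, that rescues the argument.
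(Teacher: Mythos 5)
Your argument is correct, but it takes a genuinely different route from the paper's. For the forward direction the paper also first establishes $\phi^n(\mbI)\leq\mbI$ (via an iterated square-root/operator-monotonicity argument applied to the strictly positive $\phi^n(\mbI)$, rather than your more direct observation that $\phi(\mbI)\geq\phi(\mbI)^2$ pins the spectrum of $\phi(\mbI)$ to $[0,1]$), but it then finishes in one line by pairing the positive operator $\mbI-\phi^n(\mbI)$ against the Perron--Frobenius eigenvector $\varrho$ of the \emph{adjoint} $\phi^*$: since $\inner{\varrho}{\mbI-\phi^n(\mbI)}=\inner{\varrho}{\mbI}-\inner{(\phi^*)^n(\varrho)}{\mbI}=0$ and $\varrho>0$, the difference vanishes, and uniqueness of the Perron--Frobenius eigenvector then upgrades unitality of $\phi^n$ to unitality of $\phi$. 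You instead work entirely with the eigenvector $z$ of $\phi$ itself: monotone convergence of $\phi^n(\mbI)$ to a fixed point, the uniform lower bound $\phi^n(\mbI)\geq z/\|z\|>0$ obtained by iterating $z\leq\|z\|\mbI$, and a telescoping-plus-primitivity contradiction. Both are valid; the paper's duality trick is shorter, while your version avoids the adjoint map altogether and works with $\phi(\mbI)$ directly, so no final passage from $\phi^n$ back to $\phi$ is needed. One small remark: your intermediate claim that $a_\infty=\alpha z$ (one-dimensionality of the positive fixed ray, which rests on algebraic simplicity of the Perron--Frobenius eigenvalue) is not actually needed, since the bound $a_\infty\geq z/\|z\|$ already supplies the strict positivity that your contradiction uses.
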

\begin{proof}
If $\phi$ is unital, then, because Schwarz maps are contractions \cite[Ch. 2]{Strmer2013PositiveAlgebras}, it is clear that the spectral radius of $\phi$ is equal to 1. 
So assume $\phi\in\superops$ is a primitive Schwarz map with spectral radius $r = 1$. 
    Let $\varrho$ be the Perron-Frobenius eigenvector for $\phi^*$. 
    Then $\phi^*(\varrho) = \varrho$, since the spectral radii of $\phi$ and $\phi^*$ are equal. 
    By primitivity, there is $n\in\mbN$ such that $\phi^n(\mbI) = x > 0$. 
    Because $\phi^n$ is Schwarz map, we have that $x = \phi^n(\mbI)\geq \phi^n(\mbI)^2$.
    Iterating this and using the fact that $y\mapsto y^{1/2}$ is an operator monotone function, we see that 
    \begin{equation}
        x^{1/2^m}\geq \phi^n(\mbI)
    \end{equation}
    for all $m\in\mbN$. 
    Taking the limit as $m\to\infty$ and using the fact that $x>0$, we conclude that $\mbI\geq \phi^n(\mbI)$. 
    Thus, 
    \begin{equation}
        \inner{\varrho}{\mbI - \phi^n(\mbI)} = \inner{\varrho}{\mbI} - \inner{(\phi^*)^n(\varrho)}{\mbI}
        =
        \inner{\varrho}{\mbI}
        -
        \inner{\varrho}{\mbI}
        =
        0,
    \end{equation}
    hence $\phi^n(\mbI) = \mbI$, as $\varrho>0$.
    By the primitivity of $\phi$, this implies that $D^{-1}\mbI$ is the Perron-Frobenius eigenvector of $\phi$, and concludes the proof.
\end{proof}

\section{Quadratic order bound on index of primitivity}
Now, for $\phi\in\superops$ a primitive map, define 
\begin{equation}
    \primind{\phi}
    =
    \min\set{k\in\mbN\,\,:\,\, \phi^k\text{ is strictly positive}}.
\end{equation}
We call $\primind{\phi}$ the {index of primitivity} of $\phi$.  
%
%
%
In this work, we are interested in computing a bound on $\primind{\phi}$ in the case that $\phi\in\superops$ is a primitive unital Schwarz map. 
%
%
For a unital Schwarz map $\phi$, it holds \cite[Ch. 2]{Strmer2013PositiveAlgebras} that
\begin{equation}
   \begin{split}
        \set{a\in\matrices\,\,:\,\, \phi(ab) = \phi(a)\phi(b)\,\text{and}\,
    \phi(ba) = \phi(b)\phi(a)\text{ for all $b\in\matrices$}}&\\
        &\hspace{-65mm}=
    \set{a\in\matrices\,\,:\,\, \phi(a^*a) = \phi(a)^*\phi(a)
    \,\text{and}\,
    \phi(aa^*) = \phi(a)\phi(a)^*}.
   \end{split}
\end{equation}
Let us denote the above set, which is a von Neumann algebra, by $\mcM_\phi$,  and call $\mcM_\phi$ the multiplicative domain of $\phi$. 
In the case that $\phi$ is primitive, we can give an alternative description of $\mcM_\phi$.
\begin{lemma}\label{Lem:Mult_dom_of_prim_unital_Schwarz}
    Let $\phi\in\superops$ be a primitive unital Schwarz map and let $\varrho\in\mbP_D^{\circ}$ be the Perron-Frobenius eigenvector of $\phi^*$. 
    For all $n\in\mbN$,  
    \begin{equation}
        \mcM_{\phi^n}
        =
        \set{
        a\in\matrices\,\,:\,\,
        \inner{a}{a}_\varrho
        =
        \inner{\phi^n(a)}{\phi^n(a)}_\varrho
        \,\text{and}\,
        \inner{a^*}{a^*}_\varrho
        =
        \inner{\phi^n(a)^*}{\phi^n(a)^*}_\varrho
        }.
    \end{equation}
\end{lemma}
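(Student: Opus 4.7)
The plan is to reduce the statement to the standard multiplicative-domain characterization displayed immediately before the lemma, by observing that the Schwarz inequality for $\phi^n$ becomes an equality on a positive semidefinite difference, and that pairing with $\varrho>0$ detects this equality.

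First, I would note that since $\phi$ is a unital Schwarz map, so is every iterate $\phi^n$: unitality is preserved by composition, and the Schwarz inequality for $\phi^n(a^*a)\geq \phi^n(a)^*\phi^n(a)$ follows by iterating the Schwarz inequality of $\phi$ together with monotonicity of $\phi$ on positive elements. Hence the displayed equation just before the lemma applies to $\phi^n$, giving
\[
    \mcM_{\phi^n} = \set{a\in\matrices\,:\, \phi^n(a^*a) = \phi^n(a)^*\phi^n(a)\text{ and }\phi^n(aa^*) = \phi^n(a)\phi^n(a)^*}.
\]

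Next, I would exploit the fact that $\phi^*(\varrho)=\varrho$, and therefore $(\phi^*)^n(\varrho)=\varrho$, to rewrite both inner products in the claimed set in terms of $\phi^n$. Namely, by definition of the adjoint,
\[
    \inner{a}{a}_\varrho = \operatorname{Tr}(\varrho\, a^*a) = \operatorname{Tr}((\phi^*)^n(\varrho)\, a^*a) = \operatorname{Tr}(\varrho\, \phi^n(a^*a)),
\]
while $\inner{\phi^n(a)}{\phi^n(a)}_\varrho = \operatorname{Tr}(\varrho\, \phi^n(a)^*\phi^n(a))$. Thus the first inner-product equality in the statement of the lemma is equivalent to
\[
    \operatorname{Tr}\!\bigl(\varrho\,[\phi^n(a^*a) - \phi^n(a)^*\phi^n(a)]\bigr) = 0.
\]

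The key step is then to invoke positivity: since $\phi^n$ is Schwarz, the bracketed quantity $X := \phi^n(a^*a) - \phi^n(a)^*\phi^n(a)$ is in $\mbP_D$. Because $\varrho > 0$, one has $\operatorname{Tr}(\varrho X) = \operatorname{Tr}(\varrho^{1/2}X\varrho^{1/2})$, and the latter vanishes for a positive semidefinite matrix only if $\varrho^{1/2}X\varrho^{1/2} = 0$; invertibility of $\varrho^{1/2}$ then forces $X=0$. Repeating the same reasoning with $a$ replaced by $a^*$ handles the second condition. Combining this equivalence with the first step identifies $\mcM_{\phi^n}$ with the set in the lemma statement.

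I do not anticipate any substantive obstacle; the argument is essentially bookkeeping, and the only point requiring any care is the positive-definiteness of $\varrho$ when passing from a scalar trace identity to the operator identity $\phi^n(a^*a) = \phi^n(a)^*\phi^n(a)$.
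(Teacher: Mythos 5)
Your proposal is correct and follows essentially the same route as the paper: both directions reduce to the identity $\operatorname{Tr}\bigl(\varrho\,[\phi^n(a^*a)-\phi^n(a)^*\phi^n(a)]\bigr)=0$ via $(\phi^*)^n(\varrho)=\varrho$, and the converse uses that a positive semidefinite matrix (guaranteed by the Schwarz inequality for $\phi^n$) with vanishing trace against the strictly positive $\varrho$ must be zero. Your explicit remark that $\phi^n$ inherits the unital Schwarz property by iteration is a detail the paper leaves implicit, but it is the same argument.
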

\begin{proof}
    If $a\in\mcM_{\phi^n}$, then $\phi^n(a^*a) = \phi^n(a)^*\phi^n(a)$. 
    Because $\phi^*(\varrho) = \varrho$, we have that 
    \begin{align}
        \inner{a}{a}_\varrho
        =
        \operatorname{Tr}\seq{\varrho a^*a}
        =
        \operatorname{Tr}\seq{\seq{\phi^n}^*(\varrho)a^*a}
        &=
        \operatorname{Tr}\seq{\varrho \phi^n(a^*a)}\\
        &=
        \operatorname{Tr}\seq{\varrho\phi^n(a)^*\phi^n(a)}
        =
        \inner{\phi^n(a)}{\phi^n(a)}_\varrho.
    \end{align}
    Arguing the same with $a$ replaced by $a^*$, we see that $\inner{a^*}{a^*}_\varrho
        =
        \inner{\phi^n(a)^*}{\phi^n(a)^*}_\varrho$.
    Conversely, suppose $a\in\matrices$ satisfies $\inner{a}{a}_\varrho
        =
        \inner{\phi^n(a)}{\phi^n(a)}_\varrho$.
    By the Schwarz inequality applied to $\phi^n$, we have that 
    \begin{equation}
        \varrho^{1/2}\phi^n(a^*a)\varrho^{1/2}\geq \varrho^{1/2}\phi^n(a)^*\phi^n(a)\varrho^{1/2}\geq 0.
    \end{equation}
    Thus, because $\varrho>0$, taking trace of this equality and using the fact that  $\inner{a}{a}_\varrho
        =
        \inner{\phi^n(a)}{\phi^n(a)}_\varrho$, 
    we conclude that $\phi^n(a^*a) = \phi^n(a)^*\phi^n(a)$. 
    Replacing $a$ by $a^*$ in the above argument yields that $\phi^n(aa^*) = \phi^n(a)\phi^n(a)^*$ and concludes the proof. 
\end{proof}
As a corollary, we get the following. 
\begin{cor}\label{Cor:Nested_mult_doms}
    Let $\phi\in\superops$ be a primitive unital Schwarz map.
    For all $n\in\mbN$,  $\mcM_{\phi^{n+1}}\subseteq\mcM_{\phi^n}$.
\end{cor}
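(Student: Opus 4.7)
The plan is to exploit the inner-product characterization of the multiplicative domain given by Lemma \ref{Lem:Mult_dom_of_prim_unital_Schwarz}, together with a ``monotonicity'' of the map $b\mapsto\inner{b}{b}_\varrho$ under iteration by $\phi$ that follows directly from the Schwarz inequality.

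First I would record the following sub-observation, which is the key technical ingredient: if $\psi\in\superops$ is any Schwarz map with $\psi^*(\varrho) = \varrho$, then for every $b\in\matrices$,
\begin{equation}
    \inner{b}{b}_\varrho
    \,\geq\,
    \inner{\psi(b)}{\psi(b)}_\varrho.
\end{equation}
Indeed, starting from the Schwarz inequality $\psi(b^*b)\geq\psi(b)^*\psi(b)$, sandwiching by $\varrho^{1/2}$, taking the trace, and using $\psi^*(\varrho) = \varrho$ exactly as in the proof of Lemma \ref{Lem:Mult_dom_of_prim_unital_Schwarz} gives the claim. Since $\phi$ is a unital primitive Schwarz map, so is $\phi^n$ for every $n\in\mbN$, and the Perron-Frobenius eigenvector $\varrho$ of $\phi^*$ is also fixed by $(\phi^n)^*$.

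Next I would apply this sub-observation twice. Taking $\psi = \phi^n$ and input $a$ gives
\begin{equation}
    \inner{a}{a}_\varrho \,\geq\, \inner{\phi^n(a)}{\phi^n(a)}_\varrho,
\end{equation}
while taking $\psi = \phi$ and input $\phi^n(a)$ gives
\begin{equation}
    \inner{\phi^n(a)}{\phi^n(a)}_\varrho \,\geq\, \inner{\phi^{n+1}(a)}{\phi^{n+1}(a)}_\varrho.
\end{equation}
Chaining these, for any $a\in\matrices$ we obtain
\begin{equation}
    \inner{a}{a}_\varrho \,\geq\, \inner{\phi^n(a)}{\phi^n(a)}_\varrho \,\geq\, \inner{\phi^{n+1}(a)}{\phi^{n+1}(a)}_\varrho.
\end{equation}

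Now if $a\in\mcM_{\phi^{n+1}}$, then by Lemma \ref{Lem:Mult_dom_of_prim_unital_Schwarz} the outer two quantities coincide, forcing equality throughout; in particular $\inner{a}{a}_\varrho = \inner{\phi^n(a)}{\phi^n(a)}_\varrho$. Repeating the argument with $a$ replaced by $a^*$ yields $\inner{a^*}{a^*}_\varrho = \inner{\phi^n(a)^*}{\phi^n(a)^*}_\varrho$, and then a second invocation of Lemma \ref{Lem:Mult_dom_of_prim_unital_Schwarz} gives $a\in\mcM_{\phi^n}$, completing the proof. There is no real obstacle here: the only subtle point is recognizing that the inner-product description of the multiplicative domain behaves monotonically along the orbit, so that the telescoping chain of inequalities makes the corollary immediate.
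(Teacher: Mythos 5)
Your proof is correct and follows essentially the same route as the paper: both derive the chain $\inner{a}{a}_\varrho \geq \inner{\phi^n(a)}{\phi^n(a)}_\varrho \geq \inner{\phi^{n+1}(a)}{\phi^{n+1}(a)}_\varrho$ from the Schwarz inequality and $\phi^*(\varrho)=\varrho$, then squeeze using Lemma \ref{Lem:Mult_dom_of_prim_unital_Schwarz}. The only cosmetic difference is that you package the monotonicity as a reusable sub-observation applied twice, whereas the paper writes the composite operator inequality in one line.
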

\begin{proof}
    Let $\varrho$ be the Perron-Frobenius eigenvector of $\phi^*$. 
    In general, by the Schwarz inequality, we have that 
    \begin{equation}
        \varrho^{1/2}\phi^{n+1}\seq{a^*a}\varrho^{1/2}
        \geq 
        \varrho^{1/2}\phi\seq{\phi^n(a)^*\phi^n(a)}\varrho^{1/2}
        \geq 
        \varrho^{1/2}\phi^{n+1}\seq{a}^*\phi^{n+1}\seq{a}\varrho^{1/2}
    \end{equation}
    for all $a\in\matrices$. 
    Thus, if $a\in\mcM_{\phi^{n+1}}$, then Lemma \ref{Lem:Mult_dom_of_prim_unital_Schwarz}, the fact that $\phi^*(\varrho) = \varrho$, and the above string of inequalities gives that 
    \begin{equation}
        \inner{\phi^{n+1}(a)}{\phi^{n+1}(a)}_\varrho 
        =
        \inner{a}{a}_\varrho 
        \geq 
        \inner{\phi^n(a)}{\phi^n(a)}_\varrho 
        \geq 
        \inner{\phi^{n+1}(a)}{\phi^{n+1}(a)}_\varrho.
    \end{equation}
    Thus, $\inner{\phi^n(a)}{\phi^n(a)}_\varrho  = \inner{a}{a}_\varrho$.
    Arguing the same with $a^*$ replacing $a$, we see  $\inner{\phi^n(a)^*}{\phi^n(a)^*}_\varrho  = \inner{a^*}{a^*}_\varrho$, hence $a\in\mcM_{\phi^n}$.
\end{proof}
By the finite-dimensionality of $\matrices$ and the above corollary, there is some $k\in\mbN$ such that
\begin{equation}
    \bigcap_{n\in\mbN}\mcM_{\phi^n}
    =
    \mcM_{\phi^k}.
\end{equation}
Let $\kappa(\phi)$ be the minimal such $k$, and write $\mcM_\phi^\infty$ to denote $\mcM_{\phi^{\kappa(\phi)}}$. 
\begin{lemma}\label{Lem:Stab_mult_dom_of_primitive}
    For a primitive unital Schwarz map $\phi\in\superops$, $\mcM_\phi^\infty = \mbC\mbI$. 
\end{lemma}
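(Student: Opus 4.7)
My plan is to characterize $\mcM_\phi^\infty$ via Lemma \ref{Lem:Mult_dom_of_prim_unital_Schwarz}, then exploit the ergodic convergence $\phi^n(a) \to \inner{\varrho}{a}\mbI$ from Lemma \ref{Lem:Prim_strong_mixing} together with the equality case of Cauchy--Schwarz in the $\varrho$-inner product.

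First I would verify the easy inclusion $\mbC\mbI \subseteq \mcM_\phi^\infty$ directly from unitality: for $a = c\mbI$, we have $\phi^n(a^*a) = |c|^2\mbI = \phi^n(a)^*\phi^n(a)$, and the analogous identity with $a$ and $a^*$ swapped, so $c\mbI \in \mcM_{\phi^n}$ for every $n$.

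For the nontrivial inclusion, I would pick an arbitrary $a \in \bigcap_{n\in\mbN} \mcM_{\phi^n}$ and apply Lemma \ref{Lem:Mult_dom_of_prim_unital_Schwarz} to obtain
\begin{equation}
\inner{a}{a}_\varrho = \inner{\phi^n(a)}{\phi^n(a)}_\varrho \text{ for every } n\in\mbN.
\end{equation}
Letting $n\to\infty$ and invoking Lemma \ref{Lem:Prim_strong_mixing}, the right-hand side converges to $|\inner{\varrho}{a}|^2\inner{\mbI}{\mbI}_\varrho = |\inner{\varrho}{a}|^2$, where I use $\inner{\mbI}{\mbI}_\varrho = \operatorname{Tr}(\varrho) = 1$. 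Since $\inner{\varrho}{a} = \operatorname{Tr}(\varrho a) = \inner{\mbI}{a}_\varrho$, this produces
\begin{equation}
\inner{a}{a}_\varrho = |\inner{\mbI}{a}_\varrho|^2 = |\inner{\mbI}{a}_\varrho|^2\inner{\mbI}{\mbI}_\varrho,
\end{equation}
which is precisely the equality case of the Cauchy--Schwarz inequality for the inner product $\inner{\cdot}{\cdot}_\varrho$ applied to the pair $\mbI, a$. This forces $a$ and $\mbI$ to be linearly dependent, so $a \in \mbC\mbI$, completing the proof.

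There is no substantive obstacle: Lemmas \ref{Lem:Prim_strong_mixing} and \ref{Lem:Mult_dom_of_prim_unital_Schwarz} have done all the conceptual work, and the argument reduces to recognizing an equality case in Cauchy--Schwarz. The only mild subtlety worth being careful about is keeping the inner-product conventions straight, so that the Perron--Frobenius eigenvector $\varrho$ of $\phi^*$ plays its dual role as both the weight in $\inner{\cdot}{\cdot}_\varrho$ and the representing vector of the functional $a\mapsto\operatorname{Tr}(\varrho a)$ with respect to $\inner{\cdot}{\cdot}$.
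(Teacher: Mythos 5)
Your proof is correct, but it takes a slightly different route from the paper's. The paper exploits the fact that $\mcM_\phi^\infty$ is a von Neumann algebra, hence the closed span of its projections, and then runs your limiting argument only on a projection $p$: the identity $\inner{p}{p}_\varrho = \lim_n \inner{\phi^n(p)}{\phi^n(p)}_\varrho$ becomes $\operatorname{Tr}\seq{\varrho p} = \operatorname{Tr}\seq{\varrho p}^2$, forcing $\operatorname{Tr}\seq{\varrho p}\in\set{0,1}$ and hence $p\in\set{0,\mbI}$ since $\varrho > 0$ and $\operatorname{Tr}(\varrho)=1$. You instead apply the same two lemmas to an arbitrary $a\in\bigcap_n\mcM_{\phi^n}$ and recognize $\inner{a}{a}_\varrho = |\inner{\mbI}{a}_\varrho|^2\inner{\mbI}{\mbI}_\varrho$ as the equality case of Cauchy--Schwarz, concluding $a\in\mbC\mbI$ directly. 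Your version buys independence from the structure theory of von Neumann algebras (you never need that $\mcM_\phi^\infty$ is spanned by projections), at the cost of invoking the equality case of Cauchy--Schwarz, which is elementary; the paper's version trades in the opposite direction. Both arguments transfer equally well to the places where the paper later says ``arguing as in Lemma \ref{Lem:Stab_mult_dom_of_primitive}'' (the end of Lemma \ref{Lem: Equality_iff_mult_ind_achieved} and of the proof of Theorem \ref{Thm:Main_theorem_introduction}), since there too one has $\inner{a}{a}_\varrho = \lim_m\inner{\phi^m(a)}{\phi^m(a)}_\varrho$ for every $a$ in the relevant subalgebra. One small remark: your verification of $\mbC\mbI\subseteq\mcM_\phi^\infty$ is fine but unnecessary, as each $\mcM_{\phi^n}$ is a unital subalgebra by definition.
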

\begin{proof}
    Let $\varrho$ be the Perron-Frobenius eigenvector of $\phi^*$.
    Because $\mcM_\phi^\infty$ is a von Neumann algebra, it is the closed span of its projections, so it suffices to show that if $p\in\mcM_\phi^\infty$ is a projection, $p \in\set{0, \mbI}$.
    For $p\in\mcM_\phi^\infty$ a projection, we know that 
    \begin{equation}
        \operatorname{Tr}\seq{\varrho p}
        =
        \inner{p}{p}_\varrho
        =
        \lim_{n\to\infty}
        \inner{\phi^n(p)}{\phi^n(p)}_\varrho 
        =
        \operatorname{Tr}\seq{
        \varrho 
        p
        }^2,
    \end{equation}
    which follows from Lemma \ref{Lem:Prim_strong_mixing}. 
    Thus, $ \operatorname{Tr}\seq{\varrho p}\in\set{0, 1}$. 
    Because $\varrho > 0$ and $\operatorname{Tr}(\varrho) = 1$, this implies $p\in\set{0, \mbI}$, which concludes the proof. 
\end{proof}
\begin{cor}\label{Cor:Prim_implies_fully_irred}
    For any primitive unital Schwarz map $\phi\in\superops$, $\phi^{\kappa(\phi)}$ is fully irreducible. 
\end{cor}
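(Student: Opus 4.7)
The plan is to reduce to Lemma \ref{Lem:Stab_mult_dom_of_primitive}. If $\phi^{\kappa(\phi)}(p) \leq \lambda q$ with $p, q$ projections, $p \sim q$, and $\lambda > 0$, I would show that $p \in \mcM_{\phi^{\kappa(\phi)}} = \mcM_\phi^\infty$; since the latter equals $\mbC\mbI$ by Lemma \ref{Lem:Stab_mult_dom_of_primitive}, the projection $p$ must then be $0$ or $\mbI$, and by $p \sim q$ so must $q$. Write $\psi := \phi^{\kappa(\phi)}$ throughout.

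First I would extract the algebraic consequences of $\psi(p) \leq \lambda q$. Since $q^\perp \psi(p) q^\perp \leq \lambda q^\perp q q^\perp = 0$ and $\psi(p) \geq 0$, one gets $q^\perp \psi(p)^{1/2} = 0$, so $\psi(p) = q\psi(p)q$; in particular, $\psi(p)$ lives in the corner $q\matrices q$ and commutes with $q$. By the Schwarz inequality applied to $p = p^*p$, $\psi(p) = \psi(p^*p) \geq \psi(p)^*\psi(p) = \psi(p)^2$, which places the spectrum of $\psi(p)$ in $[0,1]$; taking $\lambda$ minimal, we may further assume $\lambda \leq 1$. By Lemma \ref{Lem:Mult_dom_of_prim_unital_Schwarz} and self-adjointness of $p$, membership $p \in \mcM_\psi$ is equivalent to the single identity $\inner{p}{p}_\varrho = \inner{\psi(p)}{\psi(p)}_\varrho$. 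The $\geq$ direction is automatic from Schwarz together with the invariance $\psi^*(\varrho) = \varrho$, so the essential task becomes the reverse inequality, which in view of the Schwarz chain is equivalent to $\psi(p)$ being a projection.

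The main obstacle is this spectral-collapse step, where the trace equivalence $p \sim q$ must enter in an essential way---neither the Schwarz inequality nor the operator inequality $\psi(p) \leq \lambda q$ alone suffices. I expect to proceed by a case split on $\lambda$. When $\lambda < 1$, iterating the hypothesis by applying $\psi^m$ gives $\psi^{m+1}(p) \leq \lambda \psi^m(q)$; passing $m \to \infty$ via the ergodic limit of Lemma \ref{Lem:Prim_strong_mixing} and using $p \sim q$ will force $\psi(p)$ to vanish, and then strict positivity of $\phi^{\primind{\phi}}$ yields $p = 0$. When $\lambda = 1$, the commutativity of $\psi(p)$ with $q$ together with $p \sim q$ should pin the spectrum of $\psi(p)$ to $\{0, 1\}$, so $\psi(p)$ is a projection and hence $p \in \mcM_\psi$. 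In either case, $p \in \{0, \mbI\}$, and the conclusion for $q$ follows from $p \sim q$.
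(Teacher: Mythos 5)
Your overall reduction coincides with the paper's: show that the hypothesis forces $p\in\mcM_{\phi^{\kappa(\phi)}}=\mcM_\phi^\infty$ and invoke Lemma \ref{Lem:Stab_mult_dom_of_primitive}. Your preliminary steps are correct: $\psi(p)=q\psi(p)q$, the Schwarz inequality forces the spectrum of $\psi(p)$ into $[0,1]$ so that one may take $\lambda=\|\psi(p)\|\le 1$ (this is essentially the content of \cite[Proposition 3.3]{Rahaman2020AInequality}, which the paper invokes at exactly this point), and, for a projection $p$, membership in $\mcM_\psi$ is equivalent to $\psi(p)$ being a projection.

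However, the step you yourself flag as the main obstacle is genuinely open in your write-up, and neither branch of your case split closes it. For $\lambda=1$: from $\psi(p)\le q$, $\psi(p)=q\psi(p)q$, and spectrum in $[0,1]$, nothing pins the spectrum of $\psi(p)$ to $\{0,1\}$ --- for instance $\psi(p)=\tfrac12 q$ is consistent with everything derived to that point, and the hypothesis $p\sim q$ has not yet entered in any quantitative way. What is actually needed is the trace identity $\operatorname{Tr}(\varrho\,\psi(p))=\operatorname{Tr}(\varrho\, q)$: combined with $0\le\varrho^{1/2}(q-\psi(p))\varrho^{1/2}$ and $\varrho>0$, this yields $\psi(p)=q$, hence a projection, and the argument closes. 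This is precisely how the paper concludes (it derives $\phi^{\kappa(\phi)}(p)=q$ by pairing the operator inequality with $\varrho$), and it is the one place where $p\sim q$ must do work; since $\varrho$ need not equal $D^{-1}\mbI$, passing from $\operatorname{Tr}(p)=\operatorname{Tr}(q)$ to $\operatorname{Tr}(\varrho p)=\operatorname{Tr}(\varrho q)$ is the real content of the step and must be justified, not just gestured at. For $\lambda<1$: the ergodic limit only gives $\operatorname{Tr}(\varrho p)\,\mbI\le\lambda\operatorname{Tr}(\varrho q)\,\mbI$, i.e.\ $\operatorname{Tr}(\varrho p)\le\lambda\operatorname{Tr}(\varrho q)$, which forces nothing to vanish unless one also knows $\operatorname{Tr}(\varrho p)\ge\operatorname{Tr}(\varrho q)$ --- again not a consequence of $\operatorname{Tr}(p)=\operatorname{Tr}(q)$ alone. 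Both branches therefore reduce to the same unproven comparison between $\operatorname{Tr}(\varrho p)$ and $\operatorname{Tr}(\varrho q)$, so as written the proposal does not establish $p\in\mcM_\psi$.
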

\begin{proof}
    Let $\varrho$ be the Perron-Frobenius eigenvector of $\phi^*$.
   Let $p, q\in\matrices$ be projections with $p\sim q$ and suppose there is $\lambda > 0$ such that $\phi^{\kappa(\phi)}(p)\leq \lambda q$. 
   By \cite[Proposition 3.3]{Rahaman2020AInequality}, this implies $\phi^{\kappa(\phi)}(p)\leq q$. 
   Thus, from $\varrho > 0$ and $\phi^*(\varrho) = \varrho$, we find that $0\leq \varrho^{1/2}\phi^{\kappa(\phi)}(p)\varrho^{1/2}\leq \varrho^{1/2}q \varrho^{1/2}$, so by taking trace and using that $\varrho>0$, we conclude that $\phi^{\kappa(\phi)}(p) = q$. 
   Therefore, $\phi^{\kappa(\phi)}(p)$ is a projection. 
   But then, using that $\phi^*(\varrho) = \varrho$, we discover that 
   \begin{equation}
       \inner{p}{p}_\varrho 
       =
       \inner{\phi^{\kappa(\phi)}(p)}{\phi^{\kappa(\phi)}(p)}_\varrho,
   \end{equation}
   hence $p\in\mcM_\phi^\infty$. 
   By Lemma \ref{Lem:Stab_mult_dom_of_primitive}, this implies $p\in\set{0, \mbI}$, and therefore $q = \phi(p)\in\set{0, \mbI}$ as well. 
\end{proof}
\begin{cor}\label{Cor:Prim_implies_rank_increasing}
    For any primitive unital Schwarz map $\phi\in\superops$ and projection $p$ with $p\not\in\set{0, \mbI}$, we have that $\operatorname{rank}\seq{\phi^{\kappa(\phi)}(p)} > \operatorname{rank}\seq{p}$. 
\end{cor}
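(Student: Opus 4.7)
The plan is to argue the contrapositive: assume that $\operatorname{rank}(\phi^{\kappa(\phi)}(p))\leq \operatorname{rank}(p)$, and deduce that $p\in\{0,\mbI\}$. The idea is to package the hypothesis into a domination of $\phi^{\kappa(\phi)}(p)$ by a projection of the right size, and then invoke the full irreducibility of $\phi^{\kappa(\phi)}$ established in Corollary \ref{Cor:Prim_implies_fully_irred}.

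Concretely, first I would let $q_0$ denote the support (range) projection of $\phi^{\kappa(\phi)}(p)$, which is a positive semidefinite matrix. Since $\operatorname{rank}(q_0) = \operatorname{rank}(\phi^{\kappa(\phi)}(p))\leq \operatorname{rank}(p)$, one can enlarge $q_0$ to a projection $q$ satisfying $q_0\leq q$ and $\operatorname{rank}(q) = \operatorname{rank}(p)$, so that $p\sim q$. Setting $\lambda = \|\phi^{\kappa(\phi)}(p)\|_{\mathrm{op}}+1$ (any number strictly bounding the operator norm will do, and $\lambda = 1$ works if $\phi^{\kappa(\phi)}(p)=0$), we have
\begin{equation}
    \phi^{\kappa(\phi)}(p)\leq \|\phi^{\kappa(\phi)}(p)\|_{\mathrm{op}}\, q_0\leq \lambda q.
\end{equation}

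Now Corollary \ref{Cor:Prim_implies_fully_irred} asserts that $\phi^{\kappa(\phi)}$ is fully irreducible, so the existence of such a pair $(p,q)$ with $p\sim q$ and $\phi^{\kappa(\phi)}(p)\leq \lambda q$ forces $p,q\in\{0,\mbI\}$. Since by hypothesis $p\notin\{0,\mbI\}$, this is a contradiction, completing the contrapositive.

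There is no real obstacle here; the argument is essentially assembling Corollary \ref{Cor:Prim_implies_fully_irred} with the elementary linear algebra fact that a positive semidefinite matrix is dominated (up to scalar) by its support projection, and that any projection of rank at most $\operatorname{rank}(p)$ can be enlarged to a projection equivalent to $p$. The only detail to keep in mind is the degenerate case $\phi^{\kappa(\phi)}(p)=0$, which is handled identically since $0\leq \lambda q$ trivially.
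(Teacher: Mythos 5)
Your proof is correct, and it reaches the same destination as the paper --- an application of the full irreducibility of $\phi^{\kappa(\phi)}$ (Corollary \ref{Cor:Prim_implies_fully_irred}) to the pair consisting of $p$ and a projection built from the range of $\phi^{\kappa(\phi)}(p)$ --- but it gets the required domination in a different, and in fact simpler, way. The paper uses the Schwarz inequality $\phi^{\kappa(\phi)}(p)\geq\phi^{\kappa(\phi)}(p)^2$ together with operator monotonicity of $x\mapsto x^{1/2}$ to conclude that the range projection $q_0$ satisfies $q_0\geq\phi^{\kappa(\phi)}(p)$, i.e., domination with constant $\lambda=1$; you instead use the elementary bound $\phi^{\kappa(\phi)}(p)\leq\|\phi^{\kappa(\phi)}(p)\|_{\mathrm{op}}\,q_0$ valid for any positive semidefinite matrix. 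Since the definition of full irreducibility already allows an arbitrary constant $\lambda>0$ (and the reduction to $\lambda=1$ is performed once and for all inside the proof of Corollary \ref{Cor:Prim_implies_fully_irred} via \cite[Proposition 3.3]{Rahaman2020AInequality}), nothing is lost, and the Schwarz/operator-monotone step becomes unnecessary. Your write-up is also slightly more careful on one point the paper leaves implicit: full irreducibility only speaks about pairs of projections with $p\sim q$, so when $\operatorname{rank}(\phi^{\kappa(\phi)}(p))<\operatorname{rank}(p)$ one must enlarge $q_0$ to a projection $q$ with $\operatorname{Tr}(q)=\operatorname{Tr}(p)$ before invoking it; you do this explicitly, whereas the paper's phrase ``otherwise $p\sim q$'' elides that padding step. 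The degenerate case $\phi^{\kappa(\phi)}(p)=0$ that you flag cannot actually occur for $p\neq 0$ (it would contradict primitivity), but handling it costs nothing.
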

\begin{proof}
    The argument follows as in \cite[Theorem III.7]{Rahaman2020AInequality}, but we repeat it here for completeness. 
    For any projection $p\in\mbP_D$, the Schwarz inequality gives that $\phi^{\kappa(\phi)}(p)\geq \phi^{\kappa(\phi)}(p)^2$.
    Because $x\mapsto x^{1/2}$ is an operator monotone function, this gives
    \begin{equation}
       \phi^{\kappa(\phi)}(p)^{1/2^n}\geq \phi^{\kappa(\phi)}(p)
    \end{equation}
    for all $n$. 
    Letting $q$ denote the range projection of $\phi^{\kappa(\phi)}(p)$, taking the limit in $n$ in the above inequality yields that $q\geq \phi^{\kappa(\phi)}(p)$. 
    By Corollary \ref{Cor:Prim_implies_fully_irred}, it must be that $\operatorname{Tr}(q) > \operatorname{Tr}(p)$, as otherwise $p\sim q$ hence $p\in\set{0, \mbI}$ by the full irreducibility of $\phi^{\kappa(\phi)}$.
    But we assumed $p\not\in\set{0, \mbI}$, so indeed $\operatorname{Tr}(q) > \operatorname{Tr}(p)$, i.e., $\operatorname{rank}\seq{\phi^{\kappa(\phi)}(p)} > \operatorname{rank}\seq{p}$.
\end{proof}
\begin{cor}\label{Cor:Bound_on_primind}
    For a primitive unital Schwarz map $\phi\in\superops$, $q(\phi)\leq (D-1)\kappa(\phi)$.
\end{cor}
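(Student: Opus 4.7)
The plan is to prove that $\phi^{(D-1)\kappa(\phi)}$ is strictly positive, which gives $\primind{\phi}\leq (D-1)\kappa(\phi)$ directly from the definition of $\primind{\phi}$. The argument iterates Corollary \ref{Cor:Prim_implies_rank_increasing}: each application of $\phi^{\kappa(\phi)}$ strictly increases the rank of the range projection of a nonzero positive semidefinite input, until that range projection saturates at $\mbI$.

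First I would reduce the problem to projections. For any nonzero $x\in\mbP_D$ with range projection $p$, there exist $0 < c_1\leq c_2$ with $c_1 p\leq x\leq c_2 p$. Applying the positive map $\phi^{\kappa(\phi)}$ preserves these inequalities, so $c_1\phi^{\kappa(\phi)}(p)\leq \phi^{\kappa(\phi)}(x)\leq c_2\phi^{\kappa(\phi)}(p)$, and therefore $\phi^{\kappa(\phi)}(x)$ and $\phi^{\kappa(\phi)}(p)$ share the same range projection. Iterating this observation, the range projection of $\phi^{m\kappa(\phi)}(x)$ coincides with that of $\phi^{m\kappa(\phi)}(p)$ for every $m\in\mbN$. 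It therefore suffices to prove that $\phi^{(D-1)\kappa(\phi)}(p)>0$ for every nonzero projection $p$.

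The remainder is a rank induction. Set $p_0 := p$ and let $p_{m+1}$ denote the range projection of $\phi^{\kappa(\phi)}(p_m)$, which by the reduction above is also the range projection of $\phi^{(m+1)\kappa(\phi)}(p)$. As long as $p_m\not\in\set{0,\mbI}$, Corollary \ref{Cor:Prim_implies_rank_increasing} gives $\operatorname{rank}(p_{m+1}) > \operatorname{rank}(p_m)$; meanwhile, once some iterate satisfies $\phi^{m\kappa(\phi)}(p)\geq \varepsilon\mbI$ for some $\varepsilon > 0$, the unitality of $\phi$ gives $\phi^{k\kappa(\phi)}(p)\geq \varepsilon \phi^{(k-m)\kappa(\phi)}(\mbI) = \varepsilon \mbI > 0$ for all $k\geq m$, so the strictly positive property persists. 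Starting from $\operatorname{rank}(p_0)\geq 1$, the rank must reach $D$ within at most $D-1$ iterations, so $\phi^{(D-1)\kappa(\phi)}(p) > 0$.

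The essential mathematics is packaged in Corollary \ref{Cor:Prim_implies_rank_increasing}, so the remaining work is cosmetic: the sandwich reduction from $x$ to its range projection, and the unitality remark ensuring iterates cannot later lose full rank. Neither step I expect to be a real obstacle.
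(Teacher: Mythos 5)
Your proposal is correct and follows essentially the same route as the paper: both arguments show that $\phi^{(D-1)\kappa(\phi)}$ is strictly positive by iterating Corollary \ref{Cor:Prim_implies_rank_increasing} so that the rank of the image of a projection climbs from $1$ to $D$ in at most $D-1$ steps. The only difference is that you spell out two steps the paper leaves implicit (the paper instead reduces to rank-one projections via the spectral theorem): the sandwich argument $c_1 p\leq x\leq c_2 p$ identifying range projections across iterates, and the use of unitality to ensure strict positivity persists once attained --- both of which are worth making explicit.
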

\begin{proof}
    We show that $\phi^{(D-1)\kappa(\phi)}$ is strictly positive. 
    To do this, it suffices by the spectral theorem to show that $\phi^{(D-1)\kappa(\phi)}(p)>0$ for any rank one projection $p$. 
    This, in turn, follows from Corollary \ref{Cor:Prim_implies_rank_increasing}, which implies
    \begin{equation}
        \operatorname{rank}\seq{\phi^{(D-1)\kappa(\phi)}(p)}
        \geq \operatorname{rank}(p) + D - 1
        =
        D,
    \end{equation}
    completing the proof. 
\end{proof}
Before we can prove our main result, we need one final technical lemma.
\begin{lemma}\label{Lem: Equality_iff_mult_ind_achieved}
    For any primitive unital Schwarz map and $n\in\mbN$, $\mcM_{\phi^n} = \mcM_{\phi^{n+1}}$ if and only if $n \geq \kappa(\phi)$. 
\end{lemma}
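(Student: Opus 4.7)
The plan is to dispatch the forward direction from the definition of $\kappa(\phi)$ and to reduce the converse to an inductive step proved using a standalone sub-lemma. For the forward direction, if $n\geq \kappa(\phi)$, then iterating Corollary \ref{Cor:Nested_mult_doms} gives $\mcM_{\phi^n}\subseteq \mcM_{\phi^{\kappa(\phi)}}=\bigcap_k \mcM_{\phi^k}\subseteq \mcM_{\phi^{n+1}}$, and the reverse inclusion is again Corollary \ref{Cor:Nested_mult_doms}, so both sets coincide with $\mcM_\phi^\infty$.

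For the converse, it suffices to prove that if $\mcM_{\phi^N}=\mcM_{\phi^{N+1}}$ for some $N$, then $\mcM_{\phi^{N+1}}=\mcM_{\phi^{N+2}}$; iterating this step starting from $N=n$ yields $\mcM_{\phi^n}=\mcM_{\phi^m}$ for every $m\geq n$, hence $\mcM_{\phi^n}=\mcM_\phi^\infty$ and $n\geq \kappa(\phi)$. The key tool is the sub-lemma that, for every $m\geq 1$ and every primitive unital Schwarz map $\phi$, one has $\phi\seq{\mcM_{\phi^m}}\subseteq \mcM_{\phi^{m-1}}$. To prove it, fix $a\in\mcM_{\phi^m}$ and set $b=\phi(a)$. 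The Schwarz inequality for $\phi^{m-1}$ at $b$ gives $\phi^{m-1}(b^*b)\geq \phi^{m-1}(b)^*\phi^{m-1}(b)=\phi^m(a)^*\phi^m(a)=\phi^m(a^*a)$, where the last equality uses $a\in\mcM_{\phi^m}$; while Schwarz for $\phi$ at $a$ combined with the positivity of $\phi^{m-1}$ yields $\phi^m(a^*a)=\phi^{m-1}\seq{\phi(a^*a)}\geq \phi^{m-1}(b^*b)$. This pinches everything to equality, and the analogous argument for $b^*$ together with the norm characterization of Lemma \ref{Lem:Mult_dom_of_prim_unital_Schwarz} gives $b\in\mcM_{\phi^{m-1}}$.

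With the sub-lemma in hand, applying it with $m=N+1$ gives $\phi\seq{\mcM_{\phi^{N+1}}}\subseteq \mcM_{\phi^N}=\mcM_{\phi^{N+1}}$ under the inductive assumption, so $\phi$ maps $\mcM_{\phi^{N+1}}$ into itself. Iterating, $\phi^{N+1}(a)\in \mcM_{\phi^{N+1}}\subseteq \mcM_\phi$ for every $a\in\mcM_{\phi^{N+1}}$, and for such $a$ the chain $\phi^{N+2}(a^*a)=\phi\seq{\phi^{N+1}(a^*a)}=\phi\seq{\phi^{N+1}(a)^*\phi^{N+1}(a)}=\phi\seq{\phi^{N+1}(a)}^*\phi\seq{\phi^{N+1}(a)}=\phi^{N+2}(a)^*\phi^{N+2}(a)$ holds, where the second equality uses $a\in\mcM_{\phi^{N+1}}$ and the third uses $\phi^{N+1}(a)\in\mcM_\phi$; the analogous computation with $aa^*$ in place of $a^*a$ shows $a\in\mcM_{\phi^{N+2}}$, closing the induction.

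The main obstacle is spotting the sub-lemma $\phi\seq{\mcM_{\phi^m}}\subseteq \mcM_{\phi^{m-1}}$ as the right intermediate fact: it is precisely what makes the equality hypothesis propagate, upgrading the decreasing chain of multiplicative domains to a $\phi$-stable subalgebra as soon as two consecutive terms agree. Once that is in place, the remainder is routine Schwarz-inequality bookkeeping, and primitivity enters only indirectly through the availability of Lemma \ref{Lem:Mult_dom_of_prim_unital_Schwarz} and the definition of $\kappa(\phi)$.
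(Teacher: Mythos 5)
Your proof is correct, and the hard implication (equality of consecutive multiplicative domains forces $n\geq\kappa(\phi)$) is established by a genuinely different route than the paper's. Both arguments pivot on the same intermediate fact---that $\phi$ carries $\mcM_{\phi^{n+1}}$ into $\mcM_{\phi^n}$, so that under the equality hypothesis $\mcM_{\phi^{n+1}}$ becomes a $\phi$-invariant subalgebra---but you prove it unconditionally by pinching the operator inequalities $\phi^{m-1}\seq{b^*b}\geq\phi^{m-1}(b)^*\phi^{m-1}(b)$ and $\phi^{m}\seq{a^*a}\geq\phi^{m-1}\seq{b^*b}$ against each other, whereas the paper derives the invariance from the $\inner{\cdot}{\cdot}_\varrho$-norm characterization of Lemma \ref{Lem:Mult_dom_of_prim_unital_Schwarz}. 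After that the two proofs diverge: the paper shows $\phi\vert_{\mcM_{\phi^n}}$ is a unitary for $\inner{\cdot}{\cdot}_\varrho$ and invokes the mixing Lemma \ref{Lem:Prim_strong_mixing} to force $\mcM_{\phi^n}=\mbC\mbI$, from which $n\geq\kappa(\phi)$ follows; you instead propagate the equality algebraically to all higher exponents, so that $\mcM_{\phi^n}=\bigcap_k\mcM_{\phi^k}$ and $n\geq\kappa(\phi)$ falls out of the definition of $\kappa(\phi)$ alone. Your route is more elementary---no ergodic input, no adjoint $\phi^*_\varrho$, and primitivity enters only through the cited auxiliary results---at the cost of not re-deriving $\mcM_{\phi^n}=\mbC\mbI$, which the statement does not require anyway since Lemma \ref{Lem:Stab_mult_dom_of_primitive} already identifies the stable algebra. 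The only points worth making explicit in a final write-up are that $\phi^{m-1}$ is again a unital Schwarz map (so its Schwarz inequality is available, including the trivial case $m=1$) and that positive maps are $*$-preserving (so $b^*=\phi(a^*)$ and the companion computation with $aa^*$ goes through).
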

\begin{proof}
    Let $\varrho$ be the Perron-Frobenius eigenvector of $\phi^*$.
    The converse direction is clear, so we only need to prove the forward direction. 
    To show this, in turn, it suffices by Lemma \ref{Lem:Stab_mult_dom_of_primitive} to show that if for some $n\in\mbN$ we have $\mcM_{\phi^n} = \mcM_{\phi^{n+1}}$, then $\mcM_{\phi^n} = \mbC\mbI$. 
    To do this, notice that if $\mcM_{\phi^n} = \mcM_{\phi^{n+1}}$, we have $\phi\seq{\mcM_{\phi^n}}\subseteq\mcM_{\phi^n}$. 
    Indeed, by Lemma \ref{Lem:Mult_dom_of_prim_unital_Schwarz} and Corollary \ref{Cor:Nested_mult_doms},
    \begin{equation}
        \inner{\phi^n\seq{\phi(a)}}{\phi^n\seq{\phi(a)}}_\varrho 
        =
        \inner{\phi^{n+1}\seq{a}}{\phi^{n+1}\seq{a}}_\varrho 
        =
        \inner{a}{a}_\varrho
        =
        \inner{\phi(a)}{\phi(a)}_\varrho,
    \end{equation}
    so again by  Lemma \ref{Lem:Mult_dom_of_prim_unital_Schwarz} we have $\phi(a)\in\mcM_{\phi^{n}}$. 
    However, we also know from Lemma \ref{Lem:Mult_dom_of_prim_unital_Schwarz} and the fact that $\varrho>0$ that $\phi\vert_{\mcM_{\phi^n}}$ is injective, and therefore $\phi\seq{\mcM_{\phi^n}} = \mcM_{\phi^n}$, and $\phi\vert_{\mcM_{\phi^n}}$ is invertible. 
    Define $\phi_\varrho^*\in\superops$ by $\phi_\varrho^*(a) := \phi^*(a\varrho)\varrho^{-1}$. 
    Then it is straightforward to check that $\inner{\phi(a)}{b}_\varrho = \inner{a}{\phi^*_\varrho(b)}_\varrho$ for all $a, b\in\matrices$, i.e., $\phi_\varrho^*$ is the adjoint of $\phi$ with respect to the inner product $\inner{\cdot}{\cdot}_\varrho$. 
    We claim that $\phi^*_\varrho\vert_{\mcM_{\phi^n}} = \phi\vert_{\mcM_{\phi^n}}^{-1}$. 
    Indeed,  if we let $\psi:\mcM_{\phi^n}\to\mcM_{\phi^n}$ denote $\phi\vert_{\mcM_{\phi^n}}^{-1}$, for any $a\in\mcM_{\phi^n}$ and $c\in\matrices$ we have
    \begin{align}
        \inner{c}{\phi^*_\varrho(a)}_\varrho 
        =
        \inner{\phi(c)}{a}_\varrho 
        = 
        \inner{\phi(c)}{\phi\seq{\psi(a)}}_\varrho
        &=
        \operatorname{Tr}\seq{
        \varrho 
        \phi(c)
        \phi(\psi(a))
        }\\
        &=
        \operatorname{Tr}\seq{
        \varrho 
        \phi(c\psi(a))
        }
        && \psi(a)\in\mcM_{\phi^n}\subseteq\mcM_\phi\\
        &=
        \operatorname{Tr}\seq{
        \varrho c\psi(a)
        } &&\phi^*(\varrho) = \varrho\\
        &= 
        \inner{c}{\psi(a)}_\varrho.
    \end{align}
    Thus, since $\inner{\cdot}{\cdot}_\varrho$ is a nondegenerate inner product, we conclude that $\psi(a) = \phi_\varrho^*(a)$ for all $a\in\mcM_{\phi^n}$. 
    Thus, $\phi:\mcM_{\phi^n}\to\mcM_{\phi^n}$ is a unitary map with respect to the nondegenerate inner product $\inner{\cdot}{\cdot}_\varrho$.
    In particular, for any $a\in\mcM_{\phi^n}$, we have that 
    \begin{equation}
        \inner{a}{a}_\varrho
        =
        \lim_{m\to\infty}\inner{\phi^m(a)}{\phi^m(a)}_\varrho.
    \end{equation}
    Arguing as in Lemma \ref{Lem:Stab_mult_dom_of_primitive}, we conclude that $\mcM_{\phi^n} = \mbC\mbI$, which concludes the proof. 
\end{proof}
We may now prove Theorem \ref{Thm:Main_theorem_introduction}.
\begin{proof}[Proof of Theorem \ref{Thm:Main_theorem_introduction}]
    By Corollary \ref{Cor:Bound_on_primind}, therefore, it suffices to show that $\kappa(\phi) \leq 2(D - 1)$. 
    By Lemma \ref{Lem: Equality_iff_mult_ind_achieved}, we know that $\kappa(\phi)$ is bounded above by the maximal length of a chain of unital $C^*$-subalgebras 
    \begin{equation}
        \matrices \supseteq \mcA_1 \supsetneq \mcA_2\supsetneq\cdots\supsetneq \mcA_{\ell} = \mbC\mbI. 
    \end{equation}
    By \cite[Lemma 3.5]{Jaques2018SpectralChannels}, we know the maximal length of such a chain is $2D - 1$. 
    However, notice that $\mcM_{\phi} = \matrices$ implies $\phi:\matrices\to\matrices$ is unitary with respect to the inner product $\inner{\cdot}{\cdot}_\varrho$ by Lemma \ref{Lem:Mult_dom_of_prim_unital_Schwarz}.
    Thus, using the fact that $\phi$ is primitive and arguing as in Lemma \ref{Lem:Stab_mult_dom_of_primitive}, we conclude that $\matrices = \mcM_{\phi} = \mbC\mbI$, which is absurd.
    Thus, we see that $\kappa(\phi)\leq 2D - 2$, which concludes the proof. 
\end{proof}
From here, the proof of Theorem \ref{Thm:Main_theorem_compos} is routine.
\begin{proof}[Proof of Theorem \ref{Thm:Main_theorem_compos}]
    Let $\phi:\matrices\to\matrices$ be an arbitrary $2$-positive primitive map, and let $z$ be the Perron-Frobenius eigenvector of $\phi$.
    Define $\phi_z(a) = r^{-1}z^{-1/2}\phi(z^{1/2}az^{1/2})z^{-1/2}$.
    Then notice for $n\in\mbN$, $\phi^n$ is strictly positive if and only if $\phi_z^n$ is strictly positive. 
    Indeed, if $\phi^n$ is strictly positive, then for any $x\in\mbP_D\setminus\set{0}$, we have $z^{1/2}x z^{1/2}\in\mbP_D\setminus\set{0}$, hence $\phi^z\seq{z^{1/2}x z^{1/2}}>0$. 
    So, since $z^{-1/2}>0$ and $r>0$, we conclude $\phi_z^n(x) > 0$. 
    On the other hand, if $\phi_z^n$ is strictly positive, then for all $x\in\mbP_D\setminus\set{0}$, we have that 
    \begin{equation}
        \phi^n(x) = r^{-n} z^{1/2}\phi_z^n(z^{-1/2}xz^{-1/2})z^{1/2}
        >0,
    \end{equation}
    hence $\phi^n$ is strictly positive. 
    Therefore, $\primind{\phi} = \primind{\phi_z}$. 
    But $\phi_z(\mbI) = \mbI$ by construction, so since $\phi_z$ is $2$-positive (being the composition of 2-positive maps), by \cite{Choi1974AC-Algebras} we have that $\phi_z$ is a unital primitive Schwarz map.
    Thus, by Theorem \ref{Thm:Main_theorem_introduction}, we have that $\primind{\phi} = \primind{\phi_z}\leq 2(D-1)^2$.
\end{proof}
%
%
%


\section{Relation with \texorpdfstring{\cite[Conjecture 2]{Perez-Garcia2007MatrixRepresentations}}{l}}
We now discuss our result in relation to \cite[Conjecture 2]{Perez-Garcia2007MatrixRepresentations}, which we recall now.
For a collection $S = \set{v_i}_{i=1}^g\subset\matrices$ of matrices, recall the definition of the Wielength of $S$: 
\begin{equation}
    \texttt{Wie}\ell\seq{S}
        :=
    \inf\set{k\in\mbN
        \,\,:\,\,
    \operatorname{span} S^k = \matrices
    },
\end{equation}
where $S^k = \set{v_\sigma}_{\sigma\in\set{1, \dots, g}^k}$, where for $\sigma = \seq{i_1, \dots, i_k}\in \set{1, \dots, g}^k$, $v_\sigma := v_{i_k}\cdots v_{i_1}$.  
The number $\texttt{Wie}\ell(D)$ is physically interesting due to its relationship with the parent Hamiltonians in the theory of matrix product states \cite{Perez-Garcia2007MatrixRepresentations}.
It is conjectured that $\texttt{Wie}\ell(D)\leq O(D^2)$:
\begin{conj}[\texorpdfstring{\cite[Conjecture 2]{Perez-Garcia2007MatrixRepresentations}}{l}]\label{Conj:PG}
    For all $S\subseteq\matrices$ as above, $\texttt{Wie}\ell\seq{S}\leq O(D^2)$.
\end{conj}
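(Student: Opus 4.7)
\textbf{Proof proposal for Conjecture \ref{Conj:PG}.}
The most natural strategy using the machinery developed in this note is to associate to $S = \{v_i\}_{i=1}^g$ the completely positive map $\phi_S \in \superops$ defined by $\phi_S(x) = \sum_{i=1}^g v_i x v_i^*$, whose iterates satisfy $\phi_S^k(x) = \sum_{\sigma \in \{1, \dots, g\}^k} v_\sigma x v_\sigma^*$. Via the Choi--Jamio\l{}kowski correspondence, the defining condition $\operatorname{span} S^k = \matrices$ is equivalent to the Choi matrix of $\phi_S^k$ having full rank $D^2$, i.e., to $\phi_S^k$ being \emph{strictly completely positive}; by contrast, $\primind{\phi_S} \leq k$ only requires that $\phi_S^k$ be strictly positive. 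Thus $\texttt{Wie}\ell\seq{S}$ and $\primind{\phi_S}$ are a priori distinct indices measuring the primitive behavior of $\phi_S$, with the soft inequality $\primind{\phi_S} \leq \texttt{Wie}\ell\seq{S}$ holding unconditionally because full Choi rank of $\phi_S^k$ trivially implies trivial kernel of $\phi_S^k(x)$ for each $x > 0$.

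The plan is then to convert Theorem \ref{Thm:Main_theorem_compos} into a matching upper bound on $\texttt{Wie}\ell\seq{S}$ by establishing a reverse inequality of the form $\texttt{Wie}\ell\seq{S} \leq C(D)\cdot\primind{\phi_S}$ with $C(D) = O(1)$, or at worst $C(D) = O(D)$, so that the final bound remains at $O(D^2)$. A concrete attempt is to pass to the doubled map $\Phi_S := \phi_S \otimes \overline{\phi_S}$ acting on $\mbM_{D^2}$, verify that strict positivity of $\Phi_S^k$ is essentially equivalent to strict complete positivity of $\phi_S^k$ (hence to the Wielength condition), and then adapt the multiplicative domain chain argument used in the proof of Theorem \ref{Thm:Main_theorem_introduction} to $\Phi_S$, exploiting the extra structure---symmetry under the flip unitary on $\mbM_{D^2}$ and invariance of the maximally entangled subspace---to shorten the chain of multiplicative domains.

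The main obstacle is precisely the gap between $D^2$ and $D^4$. Applying Theorem \ref{Thm:Main_theorem_compos} blindly to $\Phi_S$ as a map on $\mbM_{D^2}$ yields only $\primind{\Phi_S}\leq 2(D^2-1)^2 = O(D^4)$, which translates into an $O(D^4)$ bound on $\texttt{Wie}\ell\seq{S}$. To recover $O(D^2)$ one would need to show that the chain of multiplicative domains $\mbM_{D^2}\supsetneq \mcA_1\supsetneq \cdots\supsetneq \mbC\mbI$ of $\Phi_S$ has length $O(D)$ rather than the generic $O(D^2)$ afforded by \cite[Lemma 3.5]{Jaques2018SpectralChannels}, using the flip symmetry between the two tensor factors. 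Equivalently, one would need to refine Corollary \ref{Cor:Prim_implies_rank_increasing} so that $\operatorname{rank}\seq{\Phi_S^{\kappa(\Phi_S)}(p)}$ grows by $\Omega(D)$ rather than by just $1$ per application. The author does not see a route to such a quantitative improvement without substantially new ideas beyond those in this note.
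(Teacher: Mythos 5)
This statement is \cite[Conjecture 2]{Perez-Garcia2007MatrixRepresentations}, which the paper does \emph{not} prove: it is recorded as an open conjecture, and the surrounding discussion only observes that $\texttt{Wie}\ell(S)\geq \primind{\phi_S}$ (possibly strictly), that Theorem \ref{Thm:Main_theorem_compos} therefore does not resolve it, and that bounding the gap between $\texttt{Wie}\ell(S)$ and $\primind{\phi_S}$ is proposed as an avenue for future work. So there is no proof in the paper against which to compare yours, and your submission is, by your own admission, not a proof either --- the final paragraph concedes that the argument lands at $O(D^4)$ and that you ``do not see a route'' to $O(D^2)$. That concession is exactly where the genuine gap lies: passing to $\Phi_S=\phi_S\otimes\overline{\phi_S}$ on $\mbM_{D^2}$ and invoking Theorem \ref{Thm:Main_theorem_compos} gives $\primind{\Phi_S}\leq 2(D^2-1)^2$, and nothing in the note's machinery (the multiplicative-domain chain of \cite[Lemma 3.5]{Jaques2018SpectralChannels}, or Corollary \ref{Cor:Prim_implies_rank_increasing}, which only guarantees rank growth by $1$ per step of $\phi^{\kappa(\phi)}$) shortens that chain by the required factor of $D$. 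Absent a new idea there, the proposal establishes at best an $O(D^4)$ bound on $\texttt{Wie}\ell(S)$, not the conjectured $O(D^2)$.

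That said, the scaffolding you set up is sound and consistent with the paper's own framing. The identification of $\operatorname{span} S^k=\matrices$ with full Choi rank of $\phi_S^k$ is correct (the Choi rank equals the dimension of the span of the Kraus operators), and the resulting inequality $\primind{\phi_S}\leq\texttt{Wie}\ell(S)$ is the one the paper cites from \cite{Sanz2010AInequality} (note only that strict positivity requires checking all nonzero $x\geq 0$, not just $x>0$, though the same argument covers both). Your proposed reverse inequality $\texttt{Wie}\ell(S)\leq C(D)\cdot\primind{\phi_S}$ is essentially the question the paper itself poses --- ``how far from $\primind{\phi_S}$ can $\texttt{Wie}\ell(S)$ possibly be?'' --- so your strategy is aligned with the suggested research direction rather than divergent from it. One technical caution on the doubled-map step: $(\phi_S\otimes\overline{\phi_S})^k$ applied to the maximally entangled projection controls the span of products of the form $v_\sigma v_\tau^*$ with $\sigma,\tau\in\{1,\dots,g\}^k$, which is not literally the condition $\operatorname{span}S^k=\matrices$; the claimed ``essential equivalence'' of strict positivity of $\Phi_S^k$ with strict complete positivity of $\phi_S^k$ would need a careful argument and may cost additional factors. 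In short: this is a reasonable research sketch for an open problem, but it is not a proof, and it should not be presented as one.
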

If we let $\phi_S$ denote the map $\phi_S(a) = \sum_i v_ia v_i^*$, it is known that $\texttt{Wie}\ell(S)\geq \primind{\phi_S}$, where this inequality may be strict \cite{Sanz2010AInequality}. 
That is, a bound on $\texttt{Wie}\ell\seq{S}$ gives a bound on the index of primitivity for completely positive maps. 
Notice, however, that a bound on $\texttt{Wie}\ell(S)$ does \textit{not} imply a bound on the index of primitivity for an arbitrary unital primitive Schwarz map, since not all Schwarz maps are completely positive. 
Thus, a resolution of Conjecture \ref{Conj:PG} does not imply our Theorem \ref{Thm:Main_theorem_compos}.
The preprint \cite{Shit} proposes a resolution of Conjecture \ref{Conj:PG}, which gives an improvement of our bound on $q$ in the particular case of completely positive maps. 
We suggest a different method for proving Conjecture \ref{Conj:PG} that makes use of our Theorem \ref{Thm:Main_theorem_compos} by posing the following question: 
How far from $\primind{\phi_S}$ can $\texttt{Wie}\ell(S)$ possibly be?
A bound on the difference between these two quantities may yield a proof of Conjecture \ref{Conj:PG}.
This seems to be an interesting avenue for further research.

\appendix 

\printbibliography

@article{Jia2024AInequality,
    title = {{A generic quantum Wielandt's inequality}},
    year = {2024},
    journal = {Quantum},
    author = {Jia, Yifan and Capel, Angela},
    month = {5},
    pages = {1331},
    volume = {8},
    doi = {10.22331/q-2024-05-02-1331},
    issn = {2521-327X}
}

@article{Rahaman2020AInequality,
    title = {{A New Bound on Quantum Wielandt Inequality}},
    year = {2020},
    journal = {IEEE Transactions on Information Theory},
    author = {Rahaman, Mizanur},
    number = {1},
    month = {1},
    pages = {147--154},
    volume = {66},
    doi = {10.1109/TIT.2019.2945776},
    issn = {0018-9448}
}

@article{Sanz2010AInequality,
    title = {{A Quantum Version of Wielandt's Inequality}},
    year = {2010},
    journal = {IEEE Transactions on Information Theory},
    author = {Sanz, Mikel and Perez-Garcia, David and Wolf, Michael M. and Cirac, Juan I.},
    number = {9},
    month = {9},
    pages = {4668--4673},
    volume = {56},
    doi = {10.1109/TIT.2010.2054552},
    issn = {0018-9448}
}

@article{Choi1974AC-Algebras,
    title = {{A Schwarz Inequality for Positive Linear Maps on C*-Algebras}},
    year = {1974},
    journal = {Illinois J. Math.},
    author = {Choi, Main-Duen},
    number = {4},
    pages = {565--574},
    volume = {18},
    doi = {10.1215/ijm/1256051007}
}

@article{Cho1992GeneralizedAlgebra,
    title = {{Generalized Choi maps in three-dimensional matrix algebra}},
    year = {1992},
    journal = {Linear Algebra and its Applications},
    author = {Cho, Sung Je and Kye, Seung-Hyeok and Lee, Sa Ge},
    month = {7},
    pages = {213--224},
    volume = {171},
    doi = {10.1016/0024-3795(92)90260-H},
    issn = {00243795}
}

@article{Perez-Garcia2007MatrixRepresentations,
    title = {{Matrix product state representations}},
    year = {2007},
    journal = {Quantum Info. Comput.},
    author = {Perez-Garcia, D and Verstraete, F and Wolf, M M and Cirac, J I},
    number = {5},
    month = {7},
    pages = {401--430},
    volume = {7},
    publisher = {Rinton Press, Incorporated},
    address = {Paramus, NJ},
    issn = {1533-7146}
}

@book{Strmer2013PositiveAlgebras,
    title = {{Positive Linear Maps of Operator Algebras}},
    year = {2013},
    author = {St{\o}rmer, Erling},
    publisher = {Springer Berlin Heidelberg},
    address = {Berlin, Heidelberg},
    isbn = {978-3-642-34368-1},
    doi = {10.1007/978-3-642-34369-8}
}

@article{Michaek2019QuantumRevisited,
    title = {{Quantum Version of Wielandt’s Inequality Revisited}},
    year = {2019},
    journal = {IEEE Transactions on Information Theory},
    author = {Micha{\l}ek, Mateusz and Shitov, Yaroslav},
    number = {8},
    pages = {5239--5242},
    volume = {65},
    doi = {10.1109/TIT.2019.2897772},
    keywords = {Stationary state;Linear matrix inequalities;Two dimensional displays;Symmetric matrices;Matrix decomposition;Mathematics;algebra;linear algebra;matrices}
}

@article{Choi1980SomeC-algebras,
    title = {{Some Assorted Inequalities for Positive Linear Maps on C*-algebras}},
    year = {1980},
    journal = {Journal of Operator Theory},
    author = {Choi, Man-Duen},
    number = {2},
    pages = {271--285},
    volume = {4},
    publisher = {Theta Foundation},
    url = {http://www.jstor.org/stable/24714007},
    issn = {03794024, 18417744}
}

@article{Evans1978SpectralC-Algebras,
    title = {{Spectral Properties of Positive Maps on C*-Algebras}},
    year = {1978},
    journal = {Journal of the London Mathematical Society},
    author = {Evans, David E and H{\o}egh-Krohn, Raphael},
    number = {2},
    pages = {345--355},
    volume = {s2-17},
    url = {https://londmathsoc.onlinelibrary.wiley.com/doi/abs/10.1112/jlms/s2-17.2.345},
    doi = {10.1112/jlms/s2-17.2.345}
}

@article{Jaques2018SpectralChannels,
    title = {{Spectral properties of tensor products of channels}},
    year = {2018},
    journal = {Journal of Mathematical Analysis and Applications},
    author = {Jaques, Samuel and Rahaman, Mizanur},
    number = {2},
    month = {9},
    pages = {1134--1158},
    volume = {465},
    doi = {10.1016/j.jmaa.2018.05.052},
    issn = {0022247X}
}

@article{Wielandt1950UnzerlegbareMatrizen,
    title = {{Unzerlegbare, nicht negative Matrizen}},
    year = {1950},
    journal = {Mathematische Zeitschrift},
    author = {Wielandt, Helmut},
    number = {1},
    month = {12},
    pages = {642--648},
    volume = {52},
    doi = {10.1007/BF02230720},
    issn = {0025-5874}
}

@misc{Shit,
    Author = {Shitov, Yaroslav},
    year = {2024},
    Howpublished = "\url{https://vixra.org/pdf/2308.0028v1.pdf}",
     title = {{Growth in matrix algebras and a conjecture of Perez-Garcia, Verstraete, Wolf and Cirac}}
}

\end{document}